\documentclass[conference]{IEEEtran}
\usepackage{amsmath}
\usepackage{amssymb}
\usepackage{amsfonts}
\usepackage{graphicx}
\usepackage{epsfig}
\usepackage{subfigure}
\usepackage{psfrag}
\usepackage{xcolor}

\begin{document}

\title{Secrecy Wireless Information and Power Transfer with MISO Beamforming}

\author{ \IEEEauthorblockN{Liang Liu, Rui
Zhang, and Kee-Chaing Chua}
\IEEEauthorblockA{ECE Department, National University
of Singapore. E-mails:\{liu\_liang,elezhang,eleckc\}@nus.edu.sg}}

\maketitle

\begin{abstract}

The dual use of radio signals for simultaneous wireless information and power transfer (SWIPT) has recently drawn significant attention. To meet the practical requirement that energy receivers (ERs) operate with much higher received power than information receivers (IRs), ERs need to be deployed closer to the transmitter than IRs. However, due to the broadcast nature of wireless channels, one critical issue is that the messages sent to IRs cannot be eavesdropped by ERs, which possess better channels from the transmitter. In this paper, we address this new secrecy communication problem in a multiuser multiple-input single-output (MISO) SWIPT system where a multi-antenna transmitter sends information and energy simultaneously to one IR and multiple ERs, each with a single antenna. By optimizing transmit beamforming vectors and their power allocation, we maximize the weighted sum-energy transferred to ERs subject to a secrecy rate constraint for the information sent to the IR. We solve this non-convex problem optimally by reformulating it into a two-stage problem. First, we fix the signal-to-interference-plus-noise ratio (SINR) at the IR and obtain the optimal beamforming solution by applying the technique of semidefinite relaxation (SDR). Then the original problem is solved by a one-dimension search over the optimal SINR value for the IR. Furthermore, two suboptimal low-complexity beamforming schemes are proposed, and their achievable (secrecy) rate-energy (R-E) regions are compared against that by the optimal scheme.

\end{abstract}

\IEEEpeerreviewmaketitle

\newtheorem{definition}{\underline{Definition}}[section]
\newtheorem{fact}{Fact}
\newtheorem{assumption}{Assumption}
\newtheorem{theorem}{\underline{Theorem}}[section]
\newtheorem{lemma}{\underline{Lemma}}[section]
\newtheorem{corollary}{Corollary}[section]
\newtheorem{proposition}{\underline{Proposition}}[section]
\newtheorem{example}{\underline{Example}}[section]
\newtheorem{remark}{\underline{Remark}}[section]
\newtheorem{algorithm}{\underline{Algorithm}}
\newcommand{\mv}[1]{\mbox{\boldmath{$ #1 $}}}
\newcommand{\smv}[1]{\small{\mbox{\boldmath{$ #1 $}}}}

\section{Introduction}\label{eqn:Introduction}

Recently, there has been an upsurge of interest in radio signal enabled simultaneous wireless information and power transfer (SWIPT) \cite{Sahai10}-\cite{Rui13}. A typical SWIPT system of practical interest is shown in Fig. \ref{fig1}, where a fixed access point (AP) with constant power supply broadcasts wireless signals to a set of user terminals (UTs), among of which some decode information from the received signals, thus referred to as information receivers (IRs), while the others harvest energy, thus called energy receivers (ERs) \cite{Rui11}. To meet the practical requirement that IRs and ERs typically operate with very different power sensitivity (e.g., $-60$dBm for IRs versus $-10$dBm for ERs), a receiver-location-based scheduling for information and energy transmission has been proposed in \cite{Rui11}, \cite{Liang13}, where ERs need to be in more proximity to the transmitter than IRs. However, this gives rise to a new information security issue since ERs, which have better channels than IRs from the transmitter, can easily eavesdrop the information sent to IRs. To tackle this challenging problem, in this paper we propose the use of multiple antennas at the transmitter to achieve the secret information transmission to IRs and yet maximize the energy simultaneously transferred to ERs, by properly designing the beamforming vectors and their power allocation at the transmitter. For the purpose of exposition, we consider a multiple-input single-output (MISO) SWIPT system with a multi-antenna transmitter, one single-antenna IR, and $K\geq 1$ single-antenna ERs, as shown in Fig. \ref{fig1}. To prevent the ERs from eavesdropping the information sent to the IR, we study the joint information and energy transmit beamforming design to maximize the weighted sum-energy transferred to ERs subject to a secrecy rate constraint for the IR. This problem is shown to be non-convex, but we solve it globally optimally by reformulating it into a two-stage optimization problem. First, we fix the signal-to-interference-plus-noise ratio (SINR) at the IR and obtain the optimal transmit beamforming solution by applying the semidefinite relaxation (SDR) technique. Next, the original problem is solved by a one-dimension search for the optimal SINR value at the IR. Furthermore, we present two suboptimal designs of lower complexity, for which the directions of information and energy beams are separately optimized with their power allocation. Finally, we compare the performance of the proposed optimal algorithm with that of the two suboptimal schemes by simulations.
\begin{figure}
\setlength{\abovecaptionskip}{-2pt}
\setlength{\belowcaptionskip}{-5pt}
\begin{center}
 \scalebox{0.35}{\includegraphics*{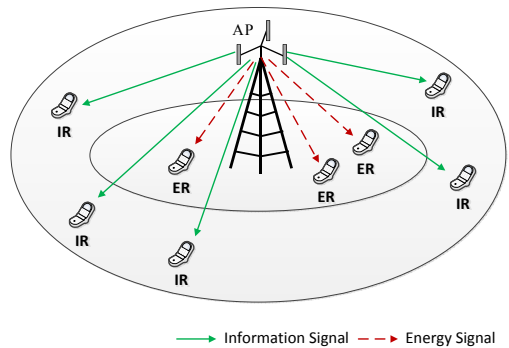}}
 \end{center}
\caption{A simultaneous wireless information and power transfer (SWIPT) system with ``near'' energy receivers (ERs) and ``far'' information receivers (IRs).} \label{fig4}\vspace{-17pt}
\end{figure}

It is worth noting that in \cite{Rui13}, a similar MISO SWIPT system as in this paper has been studied, but without the secret information transmission. As a result, it was shown in \cite{Rui13} that to maximize the weighted sum-energy transferred to ERs while meeting given SINR constraints at the IRs, the optimal strategy is to employ only information beams without any dedicated energy beam. However, in this paper we will show that with the newly introduced secrecy rate constraint at the IR, energy beams are in general needed in the optimal solution. The reason is that energy beams in this paper will also play an important role of artificial noise (AN) \cite{Goel06} to facilitate the secret information transmission to the IR by interfering with the ERs that may eavesdrop the IR's message. It is also worth noting that AN-aided secrecy communication has been extensively studied in the literature, where a fraction of the transmit power is allocated to send artificially generated noise signals to reduce the information eavesdropped by the eavesdroppers. Since in practice, eavesdroppers' channels are in general unknown at the transmitter, an isotropic approach was proposed in \cite{Goel06} where the power of AN is uniformly distributed in the null space of the legitimate receiver's channel, while the performance of this practical approach has been shown to be nearly optimal at the high signal-to-noise ratio (SNR) regime in \cite{Khisti07}. Furthermore, the MISO beamforming design problem for the AN-aided secrecy transmission under the assumption that eavesdroppers' channels are known at the transmitter has been studied in \cite{Ma11}. Notice that this assumption is not practically valid if the eavesdroppers are passive devices. However, for the SWIPT system of our interest, since ERs need to feed back their individual channels to the transmitter for it to deliver the required energy, it is reasonable to assume that ERs' channels are known at the transmitter.

The rest of this paper is organized as follows. Section \ref{sec:system model} presents the MISO SWIPT system model. Section \ref{sec:Problem Formulation} formulates the weighted sum-energy maximization problem subject to the secrecy rate constraint. Section \ref{sec:Optimal Solution} derives the optimal beamforming solution to this problem. Section \ref{sec:Suboptimal Solution} presents two suboptimal algorithms with lower complexity. Section \ref{sec:Numerical Results} provides numerical results on the performance of the proposed schemes. Finally, Section \ref{sec:Conclusion} concludes the paper.

\section{System Model}\label{sec:system model}

\begin{figure}
\setlength{\abovecaptionskip}{-5pt}
\setlength{\belowcaptionskip}{-8pt}
\begin{center}
 \scalebox{0.35}{\includegraphics*{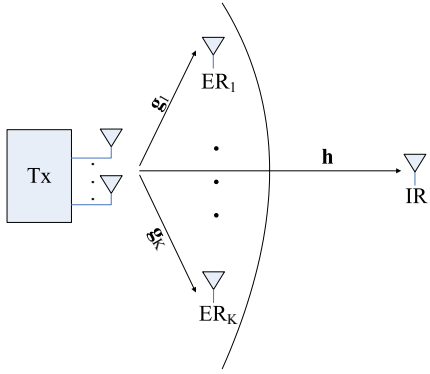}}
 \end{center}
\caption{A MISO SWIPT system with $K$ ERs and one single IR.} \label{fig1}\vspace{-10pt}
\end{figure}

We consider a multiuser MISO downlink system for SWIPT over a given frequency band as shown in Fig. \ref{fig1}. It is assumed that there is one single IR and $K$ ERs denoted by the set $\mathcal{K}_{{\rm EH}}=\{{\rm ER}_1,\cdots,{\rm ER}_K\}$, where the IR is more distant away from the transmitter (Tx) than all ERs to meet their different received power requirements. Suppose that Tx is equipped with $M>1$ antennas, while each IR/ER is equipped with one single antenna. We assume linear transmit beamforming at Tx and the IR is assigned with one dedicated information beam, while the $K$ ERs are in total allocated to $d\leq M$ energy beams without loss of generality. Therefore, the baseband transmitted signal of Tx can be expressed as
\begin{align}\label{eqn:signal3}
\mv{x}=\mv{v}_0s_0+\sum\limits_{i=1}^d\mv{w}_i s_i,
\end{align}where $\mv{v}_0\in \mathbb{C}^{M\times 1}$ and $\mv{w}_i\in \mathbb{C}^{M\times 1}$ denote the information beamforming vector and the $i$th energy beamforming vector, $1\leq i \leq d$, respectively; $s_0$ denotes the transmitted signal for the IR, while $s_i$'s, $i=1,\cdots,d$, denote the energy-carrying signals for energy beams. It is assumed that $s_0$ is a circularly symmetric complex Gaussian (CSCG) random variable with zero mean and unit variance, denoted by $s_0\sim \mathcal{CN}(0,1)$. Furthermore, $s_i$'s, $1\leq i \leq d$, are in general any arbitrary random signals each with unit average power. Since in this paper we consider secret information transmission to the IR, $s_i$'s, $1\leq i \leq d$, also play the role of AN to reduce the information rate eavesdropped by the ERs \cite{Goel06}. As a result, similarly to \cite{Goel06}-\cite{Ma11}, we assume that $s_i$'s are independent and identically distributed (i.i.d.) CSCG random variables denoted by $s_i\sim \mathcal{CN}(0,1)$, $\forall i$, since the worst-case noise for the eavesdropping ERs is known to be Gaussian. Suppose that Tx has a transmit sum-power constraint $\bar{P}$; from (\ref{eqn:signal3}), we thus have $E[\mv{x}^H\mv{x}]=\|\mv{v}_0\|^2+\sum_{i=1}^d\|\mv{w}_i\|^2\leq \bar{P}$.

In this paper, we assume a quasi-static fading environment and for convenience denote $\mv{h}\in \mathbb{C}^{M\times 1}$ and $\mv{g}_k\in \mathbb{C}^{M\times 1}$ as the conjugated complex channel vectors from Tx to IR and ${\rm ER}_k$, $k=1,\cdots,K$, respectively, where $\mv{h}$ and $\mv{g}_k$'s are assumed to be independently CSCG distributed with zero mean and covariance matrices $\rho_h^2\mv{I}$ and $\rho_{g_k}^2\mv{I}$, respectively, with $\rho_{g_k}^2> \rho_h^2$, $\forall k$. It is further assumed that $\mv{h}$ and $\mv{g}_k$'s are perfectly known at Tx. The signal received at IR is then expressed as
\begin{align}\label{eqn:signal1}
y_0=\mv{h}^H\mv{x}+z_0,
\end{align}where $z_0\sim \mathcal{CN}(0,\sigma_0^2)$ denotes the additive noise at IR. Furthermore, the signal received at ${\rm ER}_k$ can be expressed as
\begin{align}\label{eqn:signal2}
y_k=\mv{g}_k^H\mv{x}+z_k, ~~~ k=1,\cdots,K,
\end{align}where $z_k\sim \mathcal{CN}(0,\sigma_k^2)$ denotes the additive noise at ${\rm ER}_k$. It is assumed that $z_k$'s are independent over $k$.

According to (\ref{eqn:signal1}), the SINR at IR can be expressed as
\begin{align}\label{eqn:IR SINR}
{\rm SINR}_0=\frac{|\mv{v}_0^H\mv{h}|^2}{\sum\limits_{i=1}^d|\mv{w}_i^H\mv{h}|^2+\sigma_0^2},
\end{align}while according to (\ref{eqn:signal2}), the SINR at ${\rm ER}_k$ (suppose that it is an eavesdropper to decode the message for the IR instead of harvesting energy) can be expressed as
\begin{align}\label{eqn:ER SINR}
{\rm SINR}_k=\frac{|\mv{v}_0^H\mv{g}_k|^2}{\sum\limits_{i=1}^d|\mv{w}_i^H\mv{g}_k|^2+\sigma_k^2}, ~~~ k=1,\cdots,K.
\end{align}The achievable secrecy rate at IR is thus given by \cite{Poor07}:
\begin{align}\label{eqn:secrecy rate}
r_0=\min\limits_{1\leq k\leq K} ~~~ \log_2\left(1+{\rm SINR}_0\right)- \log_2\left(1+{\rm SINR}_k \right).
\end{align}

On the other hand, for wireless power transfer, due to the broadcast nature of wireless channels, the energy carried by all information and energy beams, i.e., $\mv{v}_0$ and $\mv{w}_i$'s ($1\leq i \leq d$), can all be harvested at each ${\rm ER}_k$. Hence, assuming unit slot duration, the harvested energy of ${\rm ER}_k$ in each slot is given by \cite{Rui11}:
\begin{align}\label{eqn:harvested energy}
E_k=\zeta \left(|\mv{v}_0^H\mv{g}_k|^2+\sum\limits_{i=1}^d|\mv{w}_i^H\mv{g}_k|^2\right), ~~~ 1\leq k \leq K,
\end{align}where $0<\zeta<1$ denotes the energy harvesting efficiency.

\section{Problem Formulation}\label{sec:Problem Formulation}
In this paper, we aim to maximize the weighted sum-energy transferred to all $K$ ERs subject to the secrecy rate constraint at IR. Therefore, the following problem is formulated.
\begin{align*}\mathrm{(P1)}:~\mathop{\mathtt{Maximize}}_{\mv{v}_0,\{\mv{w}_i\}}
& ~~~  \sum\limits_{k=1}^K \mu_k\zeta \left(|\mv{v}_0^H\mv{g}_k|^2+\sum\limits_{i=1}^d|\mv{w}_i^H\mv{g}_k|^2\right)  \\
\mathtt {Subject \ to} & ~~~ r_0 \geq \bar{r}_0 , \\ & ~~~
\|\mv{v}_0\|^2+\sum\limits_{i=1}^d\|\mv{w}_i\|^2 \leq \bar{P},
\end{align*}where $\mu_k\geq 0$ denotes the non-negative energy weight for ${\rm ER}_k$, and $\bar{r}_0$ denotes the target secrecy rate for IR.

Note that there are two conflicting goals in designing the information beamforming vector $\mv{v}_0$ for Problem (P1). On one hand, to maximize the weighted sum-energy, the power of the received signal at ${\rm ER}_k$ due to the information beam, i.e., $|\mv{v}_0^H\mv{g}_k|^2$, is desired to be as large as possible. However, on the other hand, from the viewpoint of secrecy rate maximization according to (\ref{eqn:IR SINR}) and (\ref{eqn:ER SINR}), it follows that $|\mv{v}_0^H\mv{g}_k|^2$ should be minimized at ${\rm ER}_k$ to avoid any ``leakage'' information. To resolve this conflict, we need to resort to the energy beamforming vectors $\mv{w}_i$'s, since they not only provide dedicated wireless energy to ERs, but also play the important role of AN to reduce the eavesdropper's SINR in (\ref{eqn:ER SINR}) at each ER.

Before we proceed to solve (P1), we first study the feasibility of this problem for a given pair of $\bar{r}_0$ and $\bar{P}$. This reduces to the case where no energy transfer is required and $\mv{w}_i$'s play the only role of AN. In this case, the algorithm proposed in \cite{Ma11} can be applied to check the feasibility of (P1). Without loss of generality, in the rest of this paper, we assume that (P1) is feasible.

Next, we consider another special case of (P1) when there is no IR. In this case, $\mv{v}_0=\mv{0}$ and thus (P1) reduces to\begin{align*}\mathrm{({\rm P1-NoI})}:~\mathop{\mathtt{Maximize}}_{\{\mv{w}_i\}}
& ~~~  \sum\limits_{k=1}^K \mu_k\zeta \left(\sum\limits_{i=1}^d|\mv{w}_i^H\mv{g}_k|^2\right)  \\
\mathtt {Subject \ to} & ~~~
\sum\limits_{i=1}^d\|\mv{w}_i\|^2 \leq \bar{P}.
\end{align*}

Let $\psi$ and $\mv{\eta}$ denote the maximum eigenvalue and its corresponding unit-norm eigenvector of the matrix $\sum_{k=1}^K\mu_k\zeta\mv{g}_k\mv{g}_k^H$, respectively. From \cite{Rui11}, the optimal value of Problem (P1-NoI) is known to be\begin{align}\label{eqn:max energy}
E_{{\rm max}}=\psi\bar{P},\end{align}which is achieved by $\mv{w}_i^\ast=\sqrt{p_i}\mv{\eta}$, $1\leq i \leq d$, for any set of $p_i$'s satisfying $\sum_{i=1}^dp_i=\bar{P}$. In other words, the optimal solution to Problem (P1-NoI) is to align all the energy beams to the same direction as $\mv{\eta}$. Now suppose that the IR is added back to (P1). In order to achieve the same maximum weighted sum-energy $E_{{\rm max}}$, it is easy to show that $\mv{v}_0=\sqrt{p_0}\mv{\eta}$ and $\mv{w}_i=\sqrt{p_i}\mv{\eta}$, $1\leq i \leq d$, for any set of $p_0$, $p_1,\cdots,p_d$ with $\sum_{i=0}^dp_i= \bar{P}$, must be a feasible solution to Problem (P1). With this solution set, the maximum achievable secrecy rate (if non-negative) can be shown from (\ref{eqn:secrecy rate}) to be
\begin{small}
\begin{align}\label{eqn:sec rate}
\bar{R}=\min\limits_{1\leq k\leq K} \ \log_2\left(1+\frac{\bar{P}|\mv{\eta}^H\mv{h}|^2}{\sigma_0^2}\right) -  \log_2\left(1+\frac{\bar{P}|\mv{\eta}^H\mv{g}_k|^2}{\sigma_k^2}\right),
\end{align}\end{small}which is achieved by setting $p_0=\bar{P}$ and $p_i=0$, $1\leq i \leq d$. As a result, in Problem (P1), if $\bar{r}_0\leq \max(0,\bar{R})$, then $\mv{v}_0^\ast=\bar{P}\mv{\eta}$ and $\mv{w}_i^\ast=\mv{0}$, $1\leq i \leq d$, is the optimal solution since the maximum objective value $E_{{\rm max}}$ is achieved and the secrecy rate constraint for the IR, i.e., $\bar{r}_0$, is satisfied. Without loss of generality, in the rest of this paper we assume that $\bar{r}_0>\max(0,\bar{R})$ in Problem (P1) to remove the above trivial case of $\bar{r}_0\leq \max(0,\bar{R})$.

\section{Optimal Solution}\label{sec:Optimal Solution}

Note that (P1) is in general (i.e., with $\bar{r}_0>\max(0,\bar{R})$) a non-convex optimization problem since both the objective function and the secrecy rate constraint are non-concave functions with respect to $\mv{v}_0$ and $\mv{w}_i$'s. In this section, we propose an SDR-based algorithm to solve Problem (P1) optimally by reformulating it into a two-stage optimization problem. First, we have the following lemma.
\begin{lemma}\label{lemma1}
For Problem (P1), there exists a SINR value $\gamma_0>0$ such that the following problem
\begin{align*}\mathrm{(P1.1)}:~~~\mathop{\mathtt{Maximize}}_{\mv{v}_0,\{\mv{w}_i\}}
& ~~~ \sum\limits_{k=1}^K \mu_k\zeta \left(|\mv{v}_0^H\mv{g}_k|^2+\sum\limits_{i=1}^d|\mv{w}_i^H\mv{g}_k|^2\right) \\
\mathtt {Subject \ to} & ~~~ \frac{|\mv{v}_0^H\mv{h}|^2}{\sum\limits_{i=1}^d|\mv{w}_i^H\mv{h}|^2+\sigma_0^2}\geq \gamma_0, \\ & ~~~ \frac{|\mv{v}_0^H\mv{g}_k|^2}{\sum\limits_{i=1}^d|\mv{w}_i^H\mv{g}_k|^2+\sigma_k^2}\leq \frac{1+\gamma_0}{2^{\bar{r}_0}}-1, \ \forall k,  \\ & ~~~
\|\mv{v}_0\|^2+\sum\limits_{i=1}^d\|\mv{w}_i\|^2 \leq \bar{P},
\end{align*}has the same optimal solution to Problem (P1).
\end{lemma}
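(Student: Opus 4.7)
The plan is to identify the right value of $\gamma_0$ with the SINR achieved at the IR by an optimal solution of (P1), and then show that (P1.1) with this choice has exactly the same feasible set of maximizers and the same optimal value as (P1). The key algebraic observation is that the secrecy-rate constraint $r_0 \geq \bar{r}_0$ in (P1) is equivalent to
\begin{align*}
\frac{1+{\rm SINR}_0}{1+{\rm SINR}_k} \geq 2^{\bar{r}_0}, \quad k=1,\ldots,K,
\end{align*}
so once the value of ${\rm SINR}_0$ is fixed to some $\gamma_0>0$, the secrecy constraint decouples into one lower-bound on ${\rm SINR}_0$ and $K$ upper-bounds on ${\rm SINR}_k$, which are precisely the first two constraints of (P1.1).

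First I would verify the easy direction: for \emph{any} $\gamma_0>0$, every feasible point of (P1.1) is feasible for (P1). Indeed, combining ${\rm SINR}_0\geq\gamma_0$ with ${\rm SINR}_k\leq(1+\gamma_0)/2^{\bar{r}_0}-1$ gives $(1+{\rm SINR}_0)/(1+{\rm SINR}_k)\geq 2^{\bar{r}_0}$, i.e. $r_0\geq\bar{r}_0$; the power constraint is identical. Hence the optimal value of (P1.1) is at most that of (P1) for any admissible $\gamma_0$.

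Next I would establish the reverse direction by a specific choice of $\gamma_0$. Let $(\mv{v}_0^\star,\{\mv{w}_i^\star\})$ be an optimal solution of (P1) and let $\gamma_0^\star$ denote the SINR at the IR it induces; since $\bar{r}_0>\max(0,\bar{R})\geq 0$ we must have $\gamma_0^\star>0$, so this choice is admissible. Plugging $\gamma_0=\gamma_0^\star$ into (P1.1), the first constraint holds with equality by construction, while the secrecy-rate inequality satisfied by $(\mv{v}_0^\star,\{\mv{w}_i^\star\})$ in (P1) rewrites as ${\rm SINR}_k\leq(1+\gamma_0^\star)/2^{\bar{r}_0}-1$ for every $k$, so the second constraint is also met; the power constraint is inherited. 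Thus the optimal solution of (P1) is feasible for (P1.1) with $\gamma_0=\gamma_0^\star$ and attains the same objective value, so the two optimal values coincide and any optimal solution of (P1) is optimal for (P1.1), and (by the first direction) vice versa.

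I do not anticipate a serious technical obstacle, since the argument is essentially an equivalence between a single bilinear secrecy-rate constraint and its parametric decoupling. The only subtle points are (i) checking $\gamma_0^\star>0$ so that (P1.1) is well-posed, which uses the standing assumption $\bar{r}_0>0$, and (ii) noting that although the proof exhibits a particular value of $\gamma_0$ (namely $\gamma_0^\star$), this value is not known a priori—this is precisely what motivates the one-dimensional search over $\gamma_0$ in the subsequent solution method.
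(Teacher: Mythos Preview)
Your proposal is correct and follows essentially the same approach as the paper: set $\gamma_0$ equal to the IR's SINR achieved by an optimal solution of (P1), then verify that this solution is feasible (hence optimal) for (P1.1), while any feasible point of (P1.1) is automatically feasible for (P1). The paper compresses this into a single sentence, whereas you spell out both directions and the positivity of $\gamma_0^\star$ explicitly, but the underlying idea is identical.
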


\begin{proof}
The proof follows easily by showing that for any given optimal solution of (P1), denoted by $\mv{v}_0^\ast$ and $\{\mv{w}_i^\ast\}$, it is also optimal for (P1.1) with $\gamma_0=|\mv{h}^H\mv{v}_0^\ast|^2/(\sum_{i=1}^d|\mv{h}^H\mv{w}_i^\ast|^2+\sigma_0^2)$.
\end{proof}

Let $g(\gamma_0)$ denote the optimal value of Problem (P1.1) with a given $\gamma_0>0$. Then, we have the following lemma.
\begin{lemma}\label{lemma2}
The optimal value of Problem (P1) is the same as that of the following problem
\begin{align*}\mathrm{(P1.2)}:~\mathop{\mathtt{Maximize}}_{\gamma_0>0}
& ~~~ g(\gamma_0).
\end{align*}

\end{lemma}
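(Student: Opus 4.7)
The plan is to prove Lemma~\ref{lemma2} by sandwiching: showing that the optimal value of (P1.2) is both an upper bound and a lower bound for the optimal value of (P1). Lemma~\ref{lemma1} already delivers one direction almost for free, so the substantive content is the reverse inclusion.

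For the lower bound on $\max_{\gamma_0 > 0} g(\gamma_0)$, I would invoke Lemma~\ref{lemma1} directly. Let $(\mv{v}_0^\ast, \{\mv{w}_i^\ast\})$ be optimal for (P1) with value $E^\ast$, and set $\gamma_0^\ast = |\mv{h}^H\mv{v}_0^\ast|^2/(\sum_{i=1}^d |\mv{h}^H\mv{w}_i^\ast|^2 + \sigma_0^2) > 0$. Lemma~\ref{lemma1} states that $(\mv{v}_0^\ast, \{\mv{w}_i^\ast\})$ is optimal for (P1.1) at $\gamma_0 = \gamma_0^\ast$, so $g(\gamma_0^\ast) = E^\ast$ and hence $\max_{\gamma_0 > 0} g(\gamma_0) \geq E^\ast$.

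For the upper bound, I would show that for \emph{every} $\gamma_0 > 0$, any feasible point of (P1.1) is feasible for (P1), which immediately yields $g(\gamma_0) \leq E^\ast$. Let $(\mv{v}_0, \{\mv{w}_i\})$ be feasible for (P1.1). The sum-power constraint is identical in the two problems, so only the secrecy-rate constraint requires checking. Combining the two SINR constraints of (P1.1) gives, for every $k$,
\begin{align*}
\log_2(1+{\rm SINR}_0) - \log_2(1+{\rm SINR}_k) \geq \log_2(1+\gamma_0) - \log_2\!\left(\tfrac{1+\gamma_0}{2^{\bar{r}_0}}\right) = \bar{r}_0,
\end{align*}
so $r_0 = \min_k [\log_2(1+{\rm SINR}_0) - \log_2(1+{\rm SINR}_k)] \geq \bar{r}_0$, meaning the point is feasible for (P1). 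Since the two objectives coincide, $g(\gamma_0) \leq E^\ast$ for every $\gamma_0 > 0$, and therefore $\max_{\gamma_0 > 0} g(\gamma_0) \leq E^\ast$. The two inequalities together give the claimed equality.

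I do not expect a serious obstacle here; the lemma is essentially a restatement of Lemma~\ref{lemma1} plus the algebraic observation above. One minor point to be careful about is the range of $\gamma_0$: if $\gamma_0 < 2^{\bar{r}_0}-1$ then the right-hand side of the ER SINR constraint in (P1.1) is negative and the problem is infeasible (so $g(\gamma_0) = -\infty$), but this does not affect the maximization over $\gamma_0 > 0$. I would mention this explicitly to justify why the one-dimensional search in the sequel can be restricted to $\gamma_0 \geq 2^{\bar{r}_0}-1$.
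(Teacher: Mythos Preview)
Your proposal is correct and follows essentially the same sandwiching argument as the paper's sketch: Lemma~\ref{lemma1} gives $\max_{\gamma_0>0} g(\gamma_0)\geq t^\ast$, and the observation that every feasible point of (P1.1) is feasible for (P1) gives $g(\gamma_0)\leq t^\ast$ for all $\gamma_0>0$. You supply the explicit algebraic verification of the secrecy-rate constraint that the paper defers to its journal version, and your remark about infeasibility when $\gamma_0<2^{\bar r_0}-1$ is a useful addition.
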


\begin{proof}
A sketch of the proof is provided here, while the details will be presented in the journal version of this paper \cite{Rui}. Let $t^\ast$ denote the optimal value of Problem (P1). From Lemma \ref{lemma1}, we have
\begin{align}\label{eqn:P1.1}\max\limits_{\gamma_0>0} \ g(\gamma_0)\geq t^\ast.\end{align}On the other hand, given any $\gamma_0>0$ for Problem (P1.1), it can be shown that any optimal solution $\mv{v}_0^\ast$ and $\{\mv{w}_i^\ast\}$ to Problem (P1.1) is feasible for Problem (P1). Therefore, we have
\begin{align}\label{eqn:p1}t^\ast \geq g(\gamma_0), \ \forall \gamma_0>0.\end{align}From (\ref{eqn:P1.1}) and (\ref{eqn:p1}), it follows that $\max\limits_{\gamma_0>0} \ g(\gamma_0) =t^\ast$. The lemma is thus proved.

\end{proof}

Let $\gamma_0^\ast$ denote the optimal solution to Problem (P1.2). Lemma \ref{lemma2} then implies that with $\gamma_0=\gamma_0^\ast$, Problems (P1) and (P1.1) have the same optimal solution. Therefore, Problem (P1) can be solved in the following two steps: First, given any $\gamma_0>0$, we solve Problem (P1.1) to find $g(\gamma_0)$; then, we solve Problem (P1.2) to find the optimal $\gamma^\ast_0$ by a one-dimension search over $\gamma_0>0$. Hence, in the rest of this section, we focus on solving (P1.1).

Note that (P1.1) is still non-convex. Define
$\mv{S}=\mv{v}_0\mv{v}_0^H$ and $\mv{Q}=\sum_{i=1}^d\mv{w}_i\mv{w}_i^H$. Then it follows that ${\rm rank}(\mv{S})\leq 1$ and ${\rm rank}(\mv{Q})\leq d$. By ignoring the above rank-one constraint on $\mv{S}$, the SDR of Problem (P1.1) can be expressed as
\begin{align}\mathrm{(P1.1-SDR)}:\nonumber \\ \mathop{\mathtt{Maximize}}_{\mv{S},\mv{Q}}
& ~~~ \sum\limits_{k=1}^K \mu_k\zeta\left({\rm Tr}(\mv{G}_k\mv{S})+{\rm Tr}(\mv{G}_k\mv{Q})\right) \nonumber \\
\mathtt {Subject \ to} & ~~~ {\rm Tr}(\mv{H}\mv{S})\geq \gamma_0 \left({\rm Tr}(\mv{H}\mv{Q})+\sigma_0^2\right), \label{eqn:constraint1}\\ & ~~~ \frac{{\rm Tr}(\mv{G}_k\mv{S})}{\gamma_e} \leq {\rm Tr}(\mv{G}_k\mv{Q})+\sigma_k^2, \ \forall k,  \label{eqn:constraint2}\\ & ~~~
{\rm Tr}(\mv{S})+{\rm Tr}(\mv{Q}) \leq \bar{P}, \label{eqn:constraint3}\\ & ~~~ \mv{S}\succeq \mv{0}, ~~~ \mv{Q}\succeq \mv{0},
\end{align}where $\mv{H}=\mv{h}\mv{h}^H$, $\mv{G}_k=\mv{g}_k\mv{g}_k^H$, and $\gamma_e=(1+\gamma_0)/2^{\bar{r}_0}-1$. If the optimal solution to Problem (P1.1-SDR), denoted by $\mv{S}^\ast$ and $\mv{Q}^\ast$, satisfies ${\rm rank}(\mv{S}^\ast)= 1$, then the optimal information beam $\mv{v}_0^\ast$ and $d$ energy beams $\mv{w}_i^\ast$'s, $i=1,\cdots,d$, to Problem (P1.1) can be obtained from the eigenvalue decompositions (EVDs) of $\mv{S}^\ast$ and $\mv{Q}^\ast$, respectively; otherwise, if ${\rm rank}(\mv{S}^\ast)>1$, the optimal value of Problem (P1.1-SDR) only serves as an upper bound on that of Problem (P1.1). In the following, we check whether ${\rm rank}(\mv{S}^\ast)=1$ holds or not for (P1.1-SDR).

Since (P1.1-SDR) is convex and satisfies the Slater's condition, its duality gap is zero \cite{Boyd04}. Let $\lambda$, $\beta_k$, and $\theta$ denote the dual variables associated with the SINR constraint of IR, the SINR constraint of ${\rm ER}_k$, $k=1,\cdots,K$, and the sum-power constraint in Problem (P1.1-SDR), respectively. Then the Lagrangian of Problem (P1.1-SDR) is expressed as
\begin{align}\label{eqn:Lagrangian}
L(\mv{S},\mv{Q},\lambda,\{\beta_k\},\theta)= & {\rm Tr}(\mv{A}\mv{S})+{\rm Tr}(\mv{B}\mv{Q}) \nonumber \\ & -\lambda \gamma_0 \sigma_0^2+\sum\limits_{k=1}^K\beta_k\gamma_e\sigma_k^2+\theta \bar{P},
\end{align}where
\begin{align}
& \mv{A}=\sum\limits_{k=1}^K\mu_k\zeta\mv{G}_k+\lambda\mv{H}-\sum\limits_{k=1}^K\beta_k\mv{G}_k-\theta\mv{I}, \label{eqn:A} \\
& \mv{B}=\sum\limits_{k=1}^K\mu_k\zeta\mv{G}_k-\lambda \gamma_0\mv{H}+\sum\limits_{k=1}^K\beta_k \gamma_e \mv{G}_k-\theta \mv{I}. \label{eqn:B}
\end{align}Let $\lambda^\ast\geq 0$, $\{\beta_k^\ast\geq 0\}$ and $\theta^\ast \geq 0$ denote the optimal dual solution to Problem (P1.1-SDR). Then, we have the following lemma.

\begin{lemma}\label{lemma3}
Given $\bar{r}_0>\max(0,\bar{R})$, the optimal dual solution to Problem (P1.1-SDR) satisfies that $\lambda^\ast>0$ and $\theta^\ast>0$.
\end{lemma}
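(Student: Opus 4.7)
The plan is to argue both $\lambda^\ast>0$ and $\theta^\ast>0$ by contradiction. The starting point in each case is the pair of KKT dual-feasibility conditions $\mv{A}\preceq\mv{0}$ and $\mv{B}\preceq\mv{0}$ for (P1.1-SDR); these must hold because (P1.1-SDR) is a linear maximization over $\mv{S},\mv{Q}\succeq\mv{0}$, so otherwise the inner maximum of the Lagrangian in (\ref{eqn:Lagrangian}) would be $+\infty$ and the dual would be infeasible. Slater's condition holds for (P1.1-SDR), so strong duality and complementary slackness are available throughout.

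To establish $\theta^\ast>0$, I would suppose $\theta^\ast=0$ and substitute this into (\ref{eqn:B}). Dual feasibility $\mv{B}\preceq\mv{0}$ then reads $\sum_{k=1}^K(\mu_k\zeta+\beta_k^\ast\gamma_e)\mv{G}_k\preceq\lambda^\ast\gamma_0\mv{H}$. The right-hand side is rank one with range $\mathrm{span}(\mv{h})$, so the PSD left-hand side must also have range inside $\mathrm{span}(\mv{h})$; because $\mv{g}_k$ is almost surely not a scalar multiple of $\mv{h}$ under the i.i.d.\ CSCG channel model, this forces $\mu_k\zeta+\beta_k^\ast\gamma_e=0$ for every $k$. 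Combined with $\zeta>0$ and $\gamma_e>0$ (the latter from the requirement $\gamma_0>2^{\bar{r}_0}-1$ needed for (P1.1) to be non-trivial under $\bar{r}_0>0$), this gives $\mu_k=0$ and $\beta_k^\ast=0$ for all $k$, trivializing the primal objective and contradicting the non-trivial setting of (P1).

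To establish $\lambda^\ast>0$, I would instead suppose $\lambda^\ast=0$. By complementary slackness the IR SINR constraint (\ref{eqn:constraint1}) is inactive at the optimum, so dropping it does not change the optimal value. The resulting relaxation admits $\mv{S}=\mv{0},\mv{Q}=\bar{P}\mv{\eta}\mv{\eta}^H$ as a feasible point (the ER constraints (\ref{eqn:constraint2}) become trivial, the power budget (\ref{eqn:constraint3}) is tight) and attains the unconstrained upper bound $\psi\bar{P}=E_{\max}$ from the (P1-NoI) analysis in (\ref{eqn:max energy}). Therefore (P1.1-SDR) itself achieves $E_{\max}$, and any optimal $(\mv{S}^\ast,\mv{Q}^\ast)$ must satisfy $\mathrm{Tr}\bigl((\sum_k\mu_k\zeta\mv{G}_k)(\mv{S}^\ast+\mv{Q}^\ast)\bigr)=\psi\bar{P}$. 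With $\psi$ a simple eigenvalue (which is generic under the continuous channel model), this forces $\mv{S}^\ast+\mv{Q}^\ast=\bar{P}\mv{\eta}\mv{\eta}^H$, so both $\mv{S}^\ast$ and $\mv{Q}^\ast$ are rank one and aligned with $\mv{\eta}$. But the largest secrecy rate attainable by any $\mv{\eta}$-aligned splitting of $\bar{P}$ equals $\bar{R}$, as identified in the derivation of (\ref{eqn:sec rate}); whereas Lemma \ref{lemma1} and feasibility of (P1.1-SDR) force the achieved secrecy rate to be at least $\bar{r}_0$, and the standing assumption $\bar{r}_0>\bar{R}$ yields the desired contradiction.

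The main obstacle is the second half: cleanly converting the vanishing-multiplier statement $\lambda^\ast=0$ into the sharp structural claim that the primal optimum has $\mv{S}^\ast+\mv{Q}^\ast=\bar{P}\mv{\eta}\mv{\eta}^H$, and then invoking the earlier observation that $\bar{R}$ is the best secrecy rate among all $\mv{\eta}$-aligned solutions. The $\theta^\ast>0$ part is comparatively mechanical once the rank-one range of $\mv{H}$ and the genericity of the channels are exploited.
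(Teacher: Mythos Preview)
The paper does not supply an in-text proof of this lemma, deferring instead to the journal version \cite{Rui}, so there is no argument here to compare yours against. Evaluated on its own, your contradiction strategy is sound and is essentially the standard way to establish strict positivity of multipliers in SDR problems of this type.

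A few points are worth tightening. First, the sentence ``by complementary slackness the IR SINR constraint is inactive'' is not quite right: $\lambda^\ast=0$ does not force the constraint to be slack. What you actually need (and what your subsequent reasoning uses) is the weaker but correct fact that when $\lambda^\ast=0$ the optimal value of (P1.1-SDR) coincides with that of the relaxation obtained by deleting (\ref{eqn:constraint1}); this follows directly from strong duality since $(\beta^\ast,\theta^\ast)$ is then dual-feasible for the relaxed problem with the same dual objective. Second, the largest secrecy rate over all $\mv{\eta}$-aligned splittings of $\bar{P}$ is $\max(0,\bar{R})$, not $\bar{R}$ (when some $|\mv{\eta}^H\mv{g}_k|^2/\sigma_k^2$ exceeds $|\mv{\eta}^H\mv{h}|^2/\sigma_0^2$, the best such rate is zero, attained as $p_0\to 0$); your contradiction still goes through because the hypothesis is $\bar{r}_0>\max(0,\bar{R})$. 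Third, the appeal to Lemma~\ref{lemma1} in the last step is unnecessary: feasibility of $(\mv{S}^\ast,\mv{Q}^\ast)$ in (P1.1-SDR) alone already yields the bound $r_0\geq \bar{r}_0$. Finally, both halves of your argument lean on genericity (that $\psi$ is simple, and that no $\mv{g}_k$ is parallel to $\mv{h}$); these hold almost surely under the paper's i.i.d.\ CSCG channel model, so stating them as standing assumptions is appropriate.
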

\begin{proof}
Please refer to the longer version of this paper \cite{Rui}.
\end{proof}

Lemma \ref{lemma3} implies that for the optimal solution of Problem (P1.1-SDR), the constraints in (\ref{eqn:constraint1}) and (\ref{eqn:constraint3}) must be satisfied with equality due to the complementary slackness \cite{Boyd04}. Define $\mv{D}^\ast=\sum_{k=1}^K\mu_k\zeta \mv{G}_k-\lambda^\ast\gamma_0\mv{H}-\sum_{k=1}^K\beta_k^\ast\mv{G}_k-\theta^\ast\mv{I}$, and $l={\rm rank}(\mv{D}^\ast)$. Let $\mv{\Pi}\in \mathbb{C}^{M\times (M-l)}$ denote the orthogonal basis of the null space of $\mv{D}^\ast$, where $\mv{\Pi}=\mv{0}$ if $l=M$, and $\mv{\pi}_n$ denote the $n$th column of $\mv{\Pi}$. Then, based on Lemma \ref{lemma3}, we have the following proposition.

\begin{proposition}\label{proposition1}
The optimal solution $(\mv{S}^\ast,\mv{Q}^\ast)$ to Problem (P1.1-SDR) satisfies the following conditions:
\begin{itemize}
\item[1.] ${\rm rank}(\mv{Q}^\ast)\leq \min(K,M)$.
\item[2.] $\mv{S}^\ast$ can be expressed as\begin{align}\label{eqn:feasible rank1}\mv{S}^\ast=\sum\limits_{n=1}^{M-l}a_n\mv{\pi}_n\mv{\pi}_n^H+b\mv{\tau}\mv{\tau}^H,\end{align}where $a_n\geq 0$, $\forall n$, $b>0$, and $\mv{\tau}\in \mathbb{C}^{M\times 1}$ has unit-norm and satisfies $\mv{\tau}^H\mv{\Pi}=\mv{0}$.
\item[3.] If $\mv{S}^\ast$ given in (\ref{eqn:feasible rank1}) has the rank larger than one, i.e., there at least exists an $n$ such that $a_n>0$, then the following solution
\begin{align}
& \bar{\mv{S}}^\ast=b\mv{\tau}\mv{\tau}^H, \label{eqn:new S1}\\
& \bar{\mv{Q}}^\ast=\mv{Q}^\ast+\sum\limits_{n=1}^{M-l}a_n\mv{\pi}_n\mv{\pi}_n^H, \label{eqn:new Q1}
\end{align}is also optimal to Problem (P2.1-SDR), with ${\rm rank}(\bar{\mv{S}}^\ast)=1$.
\end{itemize}
\end{proposition}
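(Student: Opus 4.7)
The plan is to derive all three parts from the KKT conditions of (P1.1-SDR), which is a convex SDP satisfying Slater's condition so that strong duality holds. Stationarity of the Lagrangian (\ref{eqn:Lagrangian}) in $\mv{S}$ and $\mv{Q}$ yields $\mv{A}^\ast\preceq\mv{0}$, $\mv{B}^\ast\preceq\mv{0}$, together with the complementary-slackness conditions $\mv{A}^\ast\mv{S}^\ast=\mv{0}$ and $\mv{B}^\ast\mv{Q}^\ast=\mv{0}$. By Lemma \ref{lemma3}, $\lambda^\ast,\theta^\ast>0$, so (\ref{eqn:constraint1}) and (\ref{eqn:constraint3}) are active at the optimum; in particular ${\rm Tr}(\mv{H}\mv{S}^\ast)>0$.

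For Part 1 I would write $\mv{B}^\ast=\mv{N}-\mv{M}$ with $\mv{M}:=\theta^\ast\mv{I}+\lambda^\ast\gamma_0\mv{H}\succ\mv{0}$ and $\mv{N}:=\sum_{k=1}^K(\mu_k\zeta+\beta_k^\ast\gamma_e)\mv{G}_k\succeq\mv{0}$ of rank at most $K$. Then $\mv{B}^\ast\preceq\mv{0}$ is equivalent to the PSD matrix $\mv{M}^{-1/2}\mv{N}\mv{M}^{-1/2}\preceq\mv{I}$, whose rank is still at most $K$ and whose spectrum lies in $[0,1]$. The null space of $\mv{B}^\ast$ corresponds bijectively, via $\mv{x}\mapsto\mv{M}^{1/2}\mv{x}$, to the $1$-eigenspace of $\mv{M}^{-1/2}\mv{N}\mv{M}^{-1/2}$, so its dimension is at most $K$. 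Since $\mathrm{range}(\mv{Q}^\ast)\subseteq\mathrm{null}(\mv{B}^\ast)$ via $\mv{B}^\ast\mv{Q}^\ast=\mv{0}$, and trivially ${\rm rank}(\mv{Q}^\ast)\le M$, Part 1 follows.

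For Part 2 I would first characterize $\mathrm{null}(\mv{A}^\ast)$ using the decomposition $\mv{A}^\ast=\mv{D}^\ast+\lambda^\ast(1+\gamma_0)\mv{H}$. For any $\mv{x}\in\mathrm{null}(\mv{D}^\ast)$, $\mv{x}^H\mv{A}^\ast\mv{x}=\lambda^\ast(1+\gamma_0)|\mv{h}^H\mv{x}|^2\ge 0$; combined with $\mv{A}^\ast\preceq\mv{0}$ and $\lambda^\ast>0$ this forces $\mv{h}^H\mv{x}=0$ and hence $\mv{A}^\ast\mv{x}=\mv{0}$. Thus $\mathrm{null}(\mv{D}^\ast)\subseteq\mathrm{null}(\mv{A}^\ast)\cap\mathrm{null}(\mv{H})$, so $\mv{h}^H\mv{\pi}_n=0$ for every $n$. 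A short linear-algebra argument — any two vectors in $\mathrm{null}(\mv{A}^\ast)$ with nonzero $\mv{h}$-projection must be proportional modulo $\mathrm{null}(\mv{D}^\ast)$ — shows $\mathrm{null}(\mv{A}^\ast)$ exceeds $\mathrm{null}(\mv{D}^\ast)$ by at most one dimension, and tightness of (\ref{eqn:constraint1}) (which gives ${\rm Tr}(\mv{H}\mv{S}^\ast)>0$ and hence rules out $\mathrm{range}(\mv{S}^\ast)\subseteq\mathrm{null}(\mv{D}^\ast)$) forces exactly one extra dimension, pinning down a unit vector $\mv{\tau}$, unique up to phase, with $\mv{\tau}^H\mv{\Pi}=\mv{0}$ and $\mv{\tau}\in\mathrm{null}(\mv{A}^\ast)$. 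Since $\mathrm{range}(\mv{S}^\ast)\subseteq\mathrm{span}\{\mv{\pi}_1,\ldots,\mv{\pi}_{M-l},\mv{\tau}\}$, I would then construct from any optimal $\mv{S}^\ast$ an equivalent optimal of the form (\ref{eqn:feasible rank1}) by rotating the $\mv{\pi}_n$'s to diagonalize the $\mv{\Pi}$-block of $\mv{S}^\ast$ and transferring any residual $\mv{\Pi}$--$\mv{\tau}$ cross coupling from $\mv{S}^\ast$ into $\mv{Q}^\ast$: this swap preserves $\mv{S}+\mv{Q}$ (hence the objective and sum-power), preserves $\mv{h}^H\mv{S}\mv{h}$ and $\mv{h}^H\mv{Q}\mv{h}$ because $\mv{h}\perp\mv{\pi}_n$, and only relaxes the ER constraints (\ref{eqn:constraint2}). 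Non-zero $b$ is then forced by ${\rm Tr}(\mv{H}\mv{S}^\ast)>0$.

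Part 3 is a direct verification. The substitution preserves $\mv{S}+\mv{Q}$, hence the objective and total power. Since $\mv{h}^H\mv{\pi}_n=0$, ${\rm Tr}(\mv{H}\bar{\mv{S}}^\ast)=b|\mv{h}^H\mv{\tau}|^2={\rm Tr}(\mv{H}\mv{S}^\ast)$ and ${\rm Tr}(\mv{H}\bar{\mv{Q}}^\ast)={\rm Tr}(\mv{H}\mv{Q}^\ast)$, so (\ref{eqn:constraint1}) still holds with equality; ${\rm Tr}(\mv{G}_k\bar{\mv{S}}^\ast)\le{\rm Tr}(\mv{G}_k\mv{S}^\ast)$ while ${\rm Tr}(\mv{G}_k\bar{\mv{Q}}^\ast)\ge{\rm Tr}(\mv{G}_k\mv{Q}^\ast)$, so (\ref{eqn:constraint2}) is preserved; and both matrices are manifestly PSD with ${\rm rank}(\bar{\mv{S}}^\ast)=1$. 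The hardest part will be the cross-term step in Part 2: justifying that the $\mv{\Pi}$--$\mv{\tau}$ coupling in $\mv{S}^\ast$ can actually be absorbed into $\mv{Q}^\ast$ while keeping the latter PSD, so that one really obtains an optimal solution of the displayed diagonal form rather than merely performing a formal basis rewriting.
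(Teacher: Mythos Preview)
The conference paper does not actually prove Proposition~\ref{proposition1}; it defers to the journal version. So there is no in-paper argument to compare against, and your proposal has to be judged on its own merits.

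Your treatment of Parts~1 and~3 is correct and is the standard KKT route. For Part~1, splitting $\mv{B}^\ast$ into a positive-definite part and a rank-$\le K$ PSD part and reading off the dimension of the null space via the $1$-eigenspace of $\mv{M}^{-1/2}\mv{N}\mv{M}^{-1/2}$ is clean. For Part~3, once $\mv{S}^\ast$ is in the block-diagonal form of~(\ref{eqn:feasible rank1}) and $\mv{h}^H\mv{\pi}_n=0$, moving the PSD pieces $a_n\mv{\pi}_n\mv{\pi}_n^H$ to $\mv{Q}^\ast$ clearly preserves all constraints and the objective.

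The gap you flag in Part~2 is real, and your proposed fix does not close it. Transferring the $\mv{\Pi}$--$\mv{\tau}$ cross block from $\mv{S}^\ast$ to $\mv{Q}^\ast$ moves an \emph{indefinite} Hermitian piece, so neither ``$\bar{\mv{Q}}\succeq\mv{0}$'' nor ``the ER constraints~(\ref{eqn:constraint2}) are only relaxed'' follows: the sign of $2\Re\bigl(\mv{g}_k^H\mv{\Pi}\,\mv{s}_{12}\,\mv{\tau}^H\mv{g}_k\bigr)$ is uncontrolled. What does work is a Schur-complement rank reduction. Writing $\mv{S}^\ast$ in the orthonormal basis $[\mv{\Pi}\ \mv{\tau}]$ as $\bigl(\begin{smallmatrix}\mv{S}_{11}&\mv{s}_{12}\\ \mv{s}_{12}^H&b\end{smallmatrix}\bigr)$ with $b=\mv{\tau}^H\mv{S}^\ast\mv{\tau}>0$, set $\mv{v}=\mv{\tau}+\tfrac{1}{b}\mv{\Pi}\mv{s}_{12}$. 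Then $\mv{S}^\ast-b\,\mv{v}\mv{v}^H=\mv{\Pi}\bigl(\mv{S}_{11}-\tfrac{1}{b}\mv{s}_{12}\mv{s}_{12}^H\bigr)\mv{\Pi}^H\succeq\mv{0}$ with range inside $\mathrm{null}(\mv{D}^\ast)$, so $\bar{\mv{S}}=b\,\mv{v}\mv{v}^H$ and $\bar{\mv{Q}}=\mv{Q}^\ast+(\mv{S}^\ast-\bar{\mv{S}})$ are both PSD, preserve $\mv{S}+\mv{Q}$, and (since $\mv{h}^H\mv{\Pi}=\mv{0}$) satisfy ${\rm Tr}(\mv{H}\bar{\mv{S}})={\rm Tr}(\mv{H}\mv{S}^\ast)$ and ${\rm Tr}(\mv{G}_k\bar{\mv{S}})\le{\rm Tr}(\mv{G}_k\mv{S}^\ast)$. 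This gives a rank-one optimal $\bar{\mv{S}}$ directly. The only mismatch with the literal wording of~(\ref{eqn:feasible rank1}) is that $\mv{v}$ need not be orthogonal to $\mv{\Pi}$; that cosmetic discrepancy is immaterial for what the paper actually uses, namely that the SDR admits a rank-one optimal $\mv{S}$.
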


\begin{proof}
Please refer to the longer version of this paper \cite{Rui}.
\end{proof}

With Proposition \ref{proposition1}, we are ready to find the optimal solution to problem (P1.1-SDR) with a rank-one covariance matrix for $\mv{S}$ as follows. First, we solve (P1.1-SDR) via CVX. If the obtained solution $(\mv{S}^\ast,\mv{Q}^\ast)$ satisfies ${\rm rank}(\mv{S}^\ast)=1$, then the solution is completed. Otherwise, if ${\rm rank}(\mv{S}^\ast)>1$, we can construct a new solution $(\bar{\mv{S}}^\ast,\bar{\mv{Q}}^\ast,\bar{t}^\ast)$ with ${\rm rank}(\bar{\mv{S}}^\ast)=1$ according to (\ref{eqn:feasible rank1})-(\ref{eqn:new Q1}). Therefore, the rank-one relaxation on $\mv{S}$ in (P1.1-SDR) results in no loss of optimality to (P1.1), and given any $\gamma_0>0$, the value of $g(\gamma_0)$ can be obtained by solving (P1.1-SDR). Furthermore, since ${\rm rank}(\mv{Q}^\ast)\leq \min(K,M)$ in Proposition \ref{proposition1}, it implies that in the case of $K<M$, at most $K$ energy beams are needed in the optimal solution of (P1), i.e., $d\leq K$.

\section{Suboptimal Solutions}\label{sec:Suboptimal Solution}

The optimal solution to (P1) proposed in Section \ref{sec:Optimal Solution} requires a joint optimization of the information/energy beams and their power allocation. In this section, we propose two suboptimal solutions for (P1) which can be obtained with lower complexity. Similar to \cite{Goel06}, in both of our proposed suboptimal solutions, the energy beams $\mv{w}_i$'s are all restricted to lie in the null space of the IR's channel $\mv{h}$ such that they cause no interference to the IR. However, the information beam $\mv{v}_0$ is aligned to the null space of the ERs' channels $\mv{G}=[\mv{g}_1,\cdots,\mv{g}_K]^H$ in the first suboptimal scheme in order to eliminate the information leaked to ERs, but to the same direction as $\mv{h}$ in the second suboptimal scheme to maximize the IR's SINR. Note that the first suboptimal scheme is only applicable when $K<M$ since otherwise the null space of $\mv{G}$ is empty. In the following, we present the two suboptimal schemes with more details.

\subsubsection{Suboptimal Scheme \uppercase\expandafter{\romannumeral1}}

Supposing that $K<M$, then the first suboptimal scheme aims to solve Problem (P1) with the additional constraints $\mv{v}_0^H\mv{g}_k=0$, $\forall k$, and $\mv{w}_i^H\mv{h}=0$, $\forall i$. Consider first the information beam $\mv{v}_0$. Let the singular value decomposition (SVD) of $\mv{G}$ be denoted as
\begin{align}
\mv{G}=\mv{U}\mv{\Lambda}\mv{V}^H=\mv{U}\mv{\Lambda}[\bar{\mv{V}} \ \tilde{\mv{V}}]^H,
\end{align}where $\mv{U}\in \mathbb{C}^{K\times K}$ and $\mv{V}\in \mathbb{C}^{M\times M}$ are unitary matrices, i.e., $\mv{U}\mv{U}^H=\mv{U}^H\mv{U}=\mv{I}$, $\mv{V}\mv{V}^H=\mv{V}^H\mv{V}=\mv{I}$, and $\mv{\Lambda}$ is a $K\times M$ rectangular diagonal matrix. Furthermore, $\bar{\mv{V}}\in \mathbb{C}^{M\times K}$ and $\tilde{\mv{V}}\in \mathbb{C}^{M\times (M-K)}$ consist of the first $K$ and the last $M-K$ right singular vectors of $\mv{G}$, respectively. It can be shown that $\tilde{\mv{V}}$ with $\tilde{\mv{V}}^H\tilde{\mv{V}}=\mv{I}$ forms an orthogonal basis for the null space of $\mv{G}$. Thus, to guarantee that $\mv{G}\mv{v}_0=\mv{0}$, $\mv{v}_0$ must be in the following form:
\begin{align}
\mv{v}_0=\sqrt{\tilde{P}_0}\tilde{\mv{V}}\tilde{\mv{v}}_0,
\end{align}where $\tilde{P}_0=\|\mv{v}_0\|^2$ denotes the transmit power of the information beam, and $\tilde{\mv{v}}_0$ is an arbitrary $(M-K)\times 1$ complex vector of unit norm. It can be shown that to maximize the IR's SINR, $\tilde{\mv{v}}_0$ should be aligned to the same direction as the equivalent channel $\tilde{\mv{V}}^H\mv{h}$, i.e., $\tilde{\mv{v}}_0^\ast=\tilde{\mv{V}}^H\mv{h}/\|\tilde{\mv{V}}^H\mv{h}\|$. Since the energy beams are all aligned to the null space of $\mv{h}$ (to be shown later), the secrecy rate of the IR in this scheme is
\begin{align}
r_0^{({\rm I})}=\log_2\left(1+\frac{\tilde{P}_0\|\tilde{\mv{V}}^H\mv{h}\|^2}{\sigma_0^2}\right).
\end{align}Note that the ERs cannot harvest any energy from the information beam; thus, to maximize the weighted sum-energy transferred to the ERs, $\tilde{P}_0$ should be as small as possible, i.e., $r_0^{({\rm I})}=\bar{r}_0$. It thus follows that \begin{align}\tilde{P}_0^\ast=\frac{(2^{\bar{r}_0}-1)\sigma_0^2}{\|\tilde{\mv{V}}^H\mv{h}\|^2}.\end{align}To summarize, in this scheme, we have
\begin{align}\label{eqn:sub1 v0}
\mv{v}_0^\ast=\frac{\sqrt{(2^{\bar{r}_0}-1)\sigma_0^2}\tilde{\mv{V}}\tilde{\mv{V}}^H\mv{h}}{\|\tilde{\mv{V}}^H\mv{h}\|^2}.
\end{align}

Next, consider the energy beams $\mv{w}_i$'s. Define the projection matrix as $\mv{T}=\mv{I}-\mv{h}\mv{h}^H/\|\mv{h}\|^2$. Without loss of generality, we can express $\mv{T}=\tilde{\mv{X}}\tilde{\mv{X}}^H$, where $\tilde{\mv{X}}\in \mathbb{C}^{M\times (M-1)}$ satisfies $\tilde{\mv{X}}^H\tilde{\mv{X}}=\mv{I}$. It can be shown that $\tilde{\mv{X}}$ forms an orthogonal basis for the null space of $\mv{h}$. Thus, to guarantee that $\mv{h}^H\mv{w}_i=0$, $\mv{w}_i$ must be in the following form:
\begin{align}
\mv{w}_i=\tilde{\mv{X}}\tilde{\mv{w}}_i, ~~~ i=1,\cdots,d,
\end{align}where $\tilde{\mv{w}}_i$ is an arbitrary $(M-1)\times 1$ complex vector. In this case, the energy harvested at ${\rm ER}_k$ is thus $\zeta\sum_{i=1}^d|\mv{w}_i^H\mv{g}_k|^2=\zeta\left(\sum_{i=1}^d\tilde{\mv{w}}_i^H\tilde{\mv{G}}_k\tilde{\mv{w}}_i\right)$, where $\tilde{\mv{G}}_k=\tilde{\mv{X}}^H\mv{G}_k\tilde{\mv{X}}$. To find the optimal $\tilde{\mv{w}}_i^\ast$'s, we need to solve the following problem.
\begin{align*}\mathrm{(P1-Sub1)}:~\mathop{\mathtt{Maximize}}_{\{\tilde{\mv{w}}_i\}}
& ~~~  \sum\limits_{k=1}^K \mu_k\zeta \left(\sum\limits_{i=1}^d\tilde{\mv{w}}_i^H\tilde{\mv{G}}_k\tilde{\mv{w}}_i\right)  \\
\mathtt {Subject \ to} & ~~~
\sum\limits_{i=1}^d\|\tilde{\mv{w}}_i\|^2 \leq \bar{P}-\tilde{P}_0^\ast.
\end{align*}

Let $\tilde{\psi}$ and $\tilde{\mv{\eta}}$ denote the maximum eigenvalue and its corresponding unit-norm eigenvector of the matrix $\sum_{k=1}^K\mu_k\zeta\tilde{\mv{G}}_k$, respectively. Similar to Problem (P1-NoI), it can be shown that the optimal value of Problem (P1-Sub1) is $\tilde{E}_{{\rm max}}=\tilde{\psi}(\bar{P}-\tilde{P}_0^\ast)$, which is achieved by $\tilde{\mv{w}}_i^\ast=\sqrt{\tilde{p}_i}\tilde{\mv{\eta}}$, $1\leq i \leq d$, for any set of $\tilde{p}_i$'s satisfying $\sum_{i=1}^d\tilde{p}_i=\bar{P}-\tilde{P}_0^\ast$. In practice, it is preferred to send only one energy beam to minimize the complexity of beamforming implementation at the transmitter, i.e.,
\begin{align}\label{eqn:sub1 wi}
\mv{w}_i^\ast=\left\{\begin{array}{ll}\sqrt{\bar{P}-\tilde{P}_0^\ast}\tilde{\mv{X}}\tilde{\mv{\eta}}, & {\rm if} \ i=1, \\ 0, & {\rm otherwise}.\end{array}\right.
\end{align}

\subsubsection{Suboptimal Scheme \uppercase\expandafter{\romannumeral2}}

The second suboptimal scheme aims to solve Problem (P1) with the additional constraints $\mv{v}_0=\sqrt{\hat{P}_0}\mv{h}/\|\mv{h}\|$ and $\mv{w}_i^H\mv{h}=0$, $\forall i$, where $\hat{P}_0=\|\mv{v}_0\|^2$ denotes the transmit power of the information beam. Similar to (\ref{eqn:sub1 wi}), it can be shown that the optimal energy beams should be in the following form
\begin{align}
\mv{w}_i=\left\{\begin{array}{ll}\sqrt{\bar{P}-\hat{P}_0}\tilde{\mv{X}}\tilde{\mv{\eta}}, & {\rm if} \ i=1, \\ 0, & {\rm otherwise}.\end{array}\right.
\end{align}Next, we derive the optimal power allocation $\hat{P}_0^\ast$. It can be shown that the secrecy rate of the IR in this scheme is given by
\begin{align}
r_0^{({\rm II})}=& \min_{1\leq k \leq K} ~  \log_2\left(1+\frac{\hat{P}_0\|\mv{h}\|^2}{\sigma_0^2}\right)\nonumber \\ &-\log_2\left(1+\frac{\hat{P}_0|\mv{h}^H\mv{g}_k|^2}{\|\mv{h}\|^2((\bar{P}-\hat{P}_0)|\tilde{\mv{\eta}}^H\tilde{\mv{X}}^H\mv{g}_k|^2+\sigma_k^2)}\right).
\end{align}Define the set of feasible power allocation as $\hat{\mathcal{P}}_0=\{\hat{P}_0|r_0^{({\rm II})}\geq \bar{r}_0, 0<\hat{P}_0\leq \bar{P}\}$. Then let $\hat{P}_0^{{\rm min}}$ and $\hat{P}_0^{{\rm max}}$ denote the minimal and maximal elements in the set $\hat{\mathcal{P}}_0$, respectively. Thus, to maximize the weighted sum-energy transferred to ERs subject to the secrecy rate constraint of the IR, the optimal power allocation can be expressed as
\begin{align}
\hat{P}_0^\ast=\left\{\begin{array}{ll}\hat{P}_0^{{\rm max}}, & {\rm if} \ \frac{\sum\limits_{k=1}^K\mu_k|\mv{h}^H\mv{g}_k|^2}{\|\mv{h}\|^2}\geq \sum\limits_{k=1}^K\mu_k|\tilde{\mv{\eta}}^H\tilde{\mv{X}}^H\mv{g}_k|^2, \\ \hat{P}_0^{{\rm min}}, & {\rm otherwise}.\end{array}\right.
\end{align}

\section{Numerical Results}\label{sec:Numerical Results}

In this section, we provide numerical examples to validate our results. It is assumed that Tx is equipped with $M=4$ antennas, and there are $K=3$ ERs.\footnote{Suboptimal scheme \uppercase\expandafter{\romannumeral1} is thus applicable in this case.} We assume that the signal attenuation from Tx to all ERs is $30$dB corresponding to an identical distance of $1$ meter, i.e., $\rho_{g_k}^2=-30$dB, $1\leq k \leq K$, and that to the IR is $70$dB corresponding to a distance of $20$ meters, i.e., $\rho_h^2=-70$dB. The channel vectors $\mv{g}_k$'s and $\mv{h}$ are randomly generated from i.i.d. Rayleigh fading with the respective average power values specified as above. We set $\bar{P}=1$Watt (W) or $30$dBm, $\zeta=50\%$, and $\sigma_k^2=-50$dBm, $0\leq k \leq K$. We also set $\mu_k=1$, $1\leq k \leq K$; thus, the sum-energy harvested by all ERs is considered.

Similar to \cite{Liang13}, we use the Rate-Energy (R-E) region, which consists of all the achievable (secrecy) rate and harvested sum-energy pairs for a given sum-power constraint $\bar{P}$, to compare the performance of the optimal and suboptimal schemes proposed in Sections \ref{sec:Optimal Solution} and \ref{sec:Suboptimal Solution}. Specifically, the R-E region is defined as
\begin{align}\label{eqn:R-E region}\mathcal{C}_{{\rm R-E}}\triangleq \bigcup\limits_{\|\mv{v}_0\|^2+\sum\limits_{i=1}^d\|\mv{w}_i\|^2\leq \bar{P}} \bigg\{(R,E):R \leq r_0, E\leq
\sum\limits_{k=1}^KE_k\bigg\},\end{align}where $r_0$ and $E_k$ are given in (\ref{eqn:secrecy rate}) and (\ref{eqn:harvested energy}), respectively. Note that by solving Problem (P1) with each scheme for all feasible $\bar{r}_0$'s, we can characterize the boundary of the corresponding R-E region.

\begin{figure}
\setlength{\abovecaptionskip}{-5pt}
\begin{center}
 \scalebox{0.55}{\includegraphics*{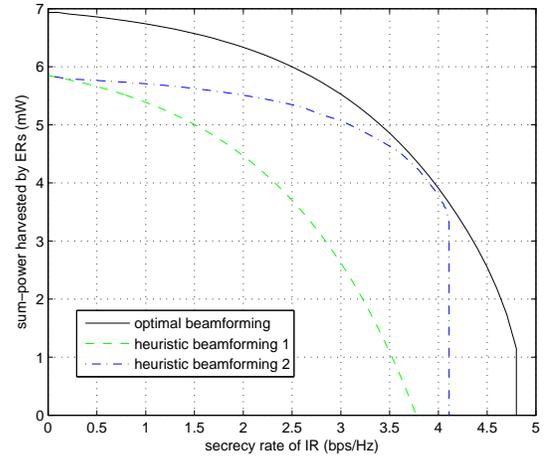}}
 \end{center}
\caption{Achievable R-E regions by the proposed schemes.} \label{fig3} \vspace{-5pt}
\end{figure}

Fig. \ref{fig3} shows a set of R-E regions achieved by different beamforming schemes. It is observed that the optimal beamforming scheme achieves the best R-E trade-off. Moreover, the suboptimal scheme \uppercase\expandafter{\romannumeral2} works better than suboptimal scheme \uppercase\expandafter{\romannumeral1} since its achieved R-E region is closer to that achieved by the optimal scheme, especially when the secrecy rate target for the IR is large. However, it is worth noting that the suboptimal scheme \uppercase\expandafter{\romannumeral1} is of the lowest complexity. Notice that for this suboptimal scheme, the closed-form expressions of the optimal information/energy beamforming vectors and their power allocation are given in (\ref{eqn:sub1 v0}) and (\ref{eqn:sub1 wi}). Furthermore, since no information leakage to ERs is achieved by the designed information beamforming, i.e., $\mv{v}_0^H\mv{g}_k=0$, $\forall k$, there is no need to design any special codebook for the secrecy information signal at Tx.

\section{Conclusion}\label{sec:Conclusion}
This paper is an initial attempt to address the important issue of physical layer security in the emerging simultaneous wireless information and power transfer (SWIPT) system. Under a MISO setup, the joint information and energy beamforming is investigated to maximize the weighted sum-energy harvested by multiple ERs subject to a given secrecy rate constraint at one single IR. We solve this non-convex optimization problem by a two-step algorithm and show that the technique of SDR yields the optimal beamforming solution. Two suboptimal beamforming schemes of lower complexity are also presented, and their performances are compared with that of the optimal scheme in terms of the achievable (secrecy) rate-energy region.

\linespread{0.8}

\end{document}